\begin{filecontents}{bibtexlogo.sty}
\def\lowBibTeX{{\reset@font\rmfamily B\kern-.05em%
    \raise.0ex\hbox{\scshape i\kern-.025em b}\kern-.08em%
    T\kern-.1667em\lower.7ex\hbox{E}\kern-.125emX}}

\end{filecontents}

\documentclass[12pt]{iopart}

%Uncomment next line if AMS fonts required
\usepackage{iopams}
\usepackage{harvard}
\usepackage{bibtexlogo}
\usepackage{amsthm}

\newtheorem{theorem}{Theorem}[section]
\newtheorem{lemma}[theorem]{Lemma}
\newtheorem{define}[theorem]{Definition}

\begin{document}
\bibliographystyle{dcu}

\title[]{SIC-POVMs and the Knaster's Conjecture}

\author{S. B. Samuel and Z. Gedik}

\address{Faculty of Engineering and Natural Sciences, Sabanci University, Tuzla, Istanbul 34956, Turkey}
\ead{solomon@sabanciuniv.edu , gedik@sabanciuniv.edu}
\vspace{10pt}
\begin{indented}
\item[]April 2024
\end{indented}

\begin{abstract}
Symmetric Informationally Complete Positive Operator-Valued Measures (SIC-POVMs) have been constructed in many dimensions using the Weyl-Heisenberg group. In the quantum information community, it is commonly believed that SCI-POVMs exist in all dimensions; however, the general proof of their existence is still an open problem. The Bloch sphere representation of SIC-POVMs allows for a general geometric description of the set of operators, where they form the vertices of a regular simplex oriented based on a continuous function. We use this perspective of the SIC-POVMs to prove the Knaster's conjecture for the geometry of SIC-POVMs and prove the existence of a continuous family of generalized SIC-POVMs where $(n^2-1)$ of the matrices have the same value of $Tr(\rho^k)$. Furthermore, by using numerical methods, we show that a regular simplex can be constructed such that all its vertices map to the same value of $Tr(\rho^3)$ on the Bloch sphere of $3$ and $4$ dimensional Hilbert spaces. In the $3$-dimensional Hilbert space, we generate $10^4$ generalized SIC-POVMs for randomly chosen $Tr(\rho^3)$ values such that all the elements are equivalent up to unitary transformations. 
\end{abstract}

%
% Uncomment for keywords
\vspace{2pc}
\noindent{\it Keywords}: Knaster's conjecture, SIC-POVMs, generalized SIC-POVMs
%
% Uncomment for Submitted to journal title message
%\submitto{\JPA}
%
% Uncomment if a separate title page is required
%\maketitle
% 
% For two-column output uncomment the next line and choose [10pt] rather than [12pt] in the \documentclass declaration
%\ioptwocol
%

\section{Introduction}

The Symmetric Informationally Complete Positive Operator-Valued Measures, in short, the SIC-POVMs, are a measurement formalism in quantum mechanics with applications in state tomography \cite{scott2006tight}, quantum cryptography \cite{durt2008wigner} and the foundational study of QBism \cite{fuchs2010qbism}. In an $n$-dimensional Hilbert space, the SIC-POVMs are defined as a set of rank-$1$ projection operators $\{\Pi_k\}$ where $Tr(\Pi_j\Pi_k) = \frac{1}{n+1}$ and the corresponding probability of the $k^{th}$ measurement is $P_k=\frac{1}{n}Tr(\rho \Pi_k)$. In 1999, Gerhard Zauner conjectured that a special vector, known as a fiducial vector, exists such that its orbit on the Weyl-Heisenberg group forms a SIC-POVM \cite{zauner}. In the same work, Zauner conjectured a symmetry of fiducial vectors and constructed SIC-POVMs in dimensions 2-7. Renes \etal independently constructed Weyl-Heisenberg SIC-POVMs and established the connection of SIC-POVMs and spherical 2-design \cite{renes2004symmetric}. Since then, SIC-POVMs have been constructed in every dimension 2-151 and many higher dimensions as high as 39604, \cite{scott2010symmetric,scott2017sics,appleby2022sic,exactfiducialsolns,bengtsson2024sic}. The only other type of SIC-POVM known is the Hoggar SIC in dimension 8, which is constructed using the Handmard matrices \cite{hoggar1981two}. Despite the large number of SIC-POVMs constructed to date, the proof of Zauner's conjecture remains open. In this article, we approached the SIC-POVMs from a geometric point of view by mapping the SIC-POVMs onto the Bloch sphere, which for our purpose, is generated by using the generalized Gell-Mann matrices.

On the Bloch sphere, pure states correspond to points on the surface of the sphere that are the global maxima of a homogeneous polynomial equation derived from $Tr(\rho^3)$. Because SIC-POVMs can be represented by rank-$1$ Hermitian matrices, we can map each of the operators onto the Bloch sphere, where they form a regular simplex such that each vertex corresponds to a pure state. This allows us to define SIC-POVMs equivalently as a set of equiangular vectors on the Bloch sphere which map to the maximum value of the homogeneous polynomial. This way of defining the SIC-POVMs allows us to connect the SIC-POVMs and a geometric problem known as Knaster's conjecture.

In 1947, Bronislaw Knaster posed the following problem in the Colloquium Mathematicae \cite{CollMath1947}, which was originally in French and presented in English in the "Life and Work of Bronislaw Knaster" in \cite{RomanDuda1987} as follows. \textit{"Given three points $p_1,p_2,p_3$ on a $2$-dimensional sphere and a continuous mapping  of $S^2$ into the $\mathbb R$. Do there exist points $q_1,q_2,q_3$ on $S^2$ which are equivalent to rotation to $p_1,p_2,p_3$ that have the common image on $\mathbb R$, i.e., }

\begin{equation}
f(q_1)=f(q_2)=f(q_3)\mbox{ \textit{?}}
\end{equation}

\noindent
\textit{More generally, For any given $k$ points $(p_1,\cdots,p_k)$ on $S^n$ and any continuous mapping of $S^n$ into $\mathbb R^{n-k+2}$, where $k\in \{2,\cdots,n+1\}$, is there a set of points $(q_1,\cdots,q_k)$ which has the same image in $\mathbb R^{n-k+2}$?"} The problem has been answered for various special sets of vectors. One of the most notable examples is Kakutani's theorem. Kakutani proved the conjecture for orthogonal vectors on a $3$-dimensional sphere in \cite{kakutani1942proof}. Later, Yamebe and Yujobo (1950) generalized the theorem to orthogonal vectors in arbitrary dimensions. Other examples are the Borsuk-Ulam theorem, which corresponds to antipodal points on an $n$-sphere \cite{borsuk1933drei} and Dyson's theorem, which corresponds to $2$ orthogonal lines crossing through the origin of an $n$-sphere \cite{dyson1951continuous}. 

In section \ref{proof of conjecture}, by using the symmetry of the SIC-POVMs, we prove the Knaster's conjecture for a set of $N$ unit vectors $\{\vec p_k\in \mathbb R^N | \vec p_j \cdot \vec p_k = -\frac{1}{N}\}$. Following that, we prove that if a SIC-POVM exists in $n$-dimensional Hilbert space, then continuous generalized SIC-POVMs, which are defined in \cite{gour2014construction}, exist such that $n^2-1$ of the elements have a common trace value $Tr(\Pi_k^m) = f_0$ for all $m \geq 3$. In dimension 3, this indicates the existence of tight frames where all but one of the elements are equivalent up to unitary transformations. We further confirmed this numerically and discovered that all the $9$ vectors forming the tight frame can be constructed, such that all the elements have the same set of eigenvalues. These tight frames form a continuous set of generalized SIC-POVMs, which in general are not group covariant. In dimension 4, we found that 16 Hermitian matrices can be constructed such that all the matrices have the same $Tr(\rho^3)$ values, for all possible values of the function.  Similarly, for $Tr(\rho^4)$, 16 equiangular vectors are constructed such that all vectors map to arbitrary values of the function.

\section{The Bloch Sphere}

The Bloch sphere is the representation of $2$-dimensional density matrices in a $3$-dimensional real space \cite{Bloch1946}. All points on the surface of the Bloch sphere correspond to pure state density matrices and the interior of the Bloch sphere corresponds to mixed state density matrices. The map from the $2$-dimensional Hilbert space to the Bloch sphere is generated by the Pauli matrices, $(\sigma_x,\sigma_y,\sigma_z)$, which form the basis for the space of trace-less Hermitian matrices. A density matrix $\rho$ is mapped to a vector $\vec r$ in the Bloch sphere, such that $\rho = \frac{1}{2} \mathbb I + \vec \sigma \cdot \vec r$ or in its matrix form as

\begin{equation}\label{2d density matrix}
    \rho = 
    \left(\begin{array}{cc}
\frac{1}{2}+r_z & r_x-ir_y \\
r_x+ir_y & \frac{1}{2}-r_z
\end{array}\right) \mbox{ .}
\end{equation}

\noindent
Conversely, the vector $\vec r$ is computed as $\vec r(\rho) = \frac{1}{2} Tr(\rho \vec \sigma)$.

The Bloch sphere can be generalized to higher dimensions by using different basis matrices. For example in dimension 3, the Gell-Mann matrices, polarization operators, and the Weyl-Heisenberg group were explored as choices for the generalization of the Bloch sphere in \cite{Bertlmann2008}. For our purposes, we need the Bloch sphere to be in the real vector space for geometric simplicity. Thus, we will use the generalized Gell-Mann matrices to define the Bloch sphere in arbitrary dimensions. Additionally, the construction of the Gell-Mann matrices is simpler compared to other basis matrices which makes them preferable for computations. The Bloch sphere defined by using the generalized Gell-Mann matrices has been explored in \cite{kimura2003bloch}, where the equations defining the pure states and mixed states are presented. For completeness purposes, we will briefly describe the properties and characterization of density matrices.

In $n$-dimensional Hilbert space, the Gell-Mann matrices are a set of $n^2-1$ Hermitian matrices that span the space of trace-less Hermitian matrices. Following the notation in \cite{Bertlmann2008}, we write the generalized Gell-Mann matrices as .

\begin{equation}\label{generalised gellmann matrices}
    \begin{array}{lll}
    \Lambda_{sym}^{jk} &= |j\rangle\langle k|+|j\rangle\langle k| &, 1\leq j<k \leq n \mbox{ ,}\\
    \Lambda_{asym}^{jk} &= -i(|j\rangle\langle k|-|j\rangle\langle k|) &, 1\leq j<k \leq n \mbox{ ,} \\
    \Lambda_{diag}^l &= \frac{2}{\sqrt{l(l+1)}} \sum_{k=1}^{l} {|k\rangle\langle k|-l|l+1\rangle\langle l+1|} &, 0\leq l \leq n-1 \mbox{ .}
    \end{array}
\end{equation}

\noindent
Like the Pauli matrices, generalized Gell-Mann matrices are mutually orthogonal, i.e., $Tr(\Lambda_j \Lambda_k) =2 \delta_{jk}$. To simplify the notations, we will represent the generalized Gell-Mann matrices as a vector $\vec \Lambda$ by placing the matrices in the same order as \eref{generalised gellmann matrices}. Then, a density matrix $\rho$ in $n$-dimensional Hilbert space is mapped onto the Bloch sphere by the equality $\vec r(\rho) = \frac{1}{2} Tr(\rho \vec \Lambda)$, such that $\rho = \frac{1}{n}\mathbb I + \vec r \cdot \vec \Lambda$. 

To explore the geometrical properties of quantum states on the Bloch sphere, in addition to the vector representation of the quantum states, we need to map the unitary operators onto the Bloch sphere. Let $u$ be a unitary matrix, $\rho$ an arbitrary density matrix, and let  $\rho'$ be  $u\rho u^\dagger$. The corresponding vector $\vec r'$ can be computed by taking the trace of $\rho' \vec \Lambda$, 

\begin{equation}
\begin{array}{ll}
     \frac{1}{2}Tr(\rho' \Lambda_j) &= \frac{1}{2}Tr(u\rho u^\dagger)\\
          &= \frac{1}{2n}Tr(u\mathbb I u^\dagger\Lambda_j) + \frac{1}{2} Tr(u\Lambda_k u^\dagger\Lambda_j) r_k \mbox{ .}
\end{array}
\end{equation}

\noindent
As a result, $\vec r' = \frac{1}{2}Tr(\Lambda_ju\Lambda_k u^\dagger) r_k$. Thus, a unitary transformation is mapped to the Bloch sphere by the equation $M_{jk} = \frac{1}{2}Tr(\Lambda_ju\Lambda_k u^\dagger)$. The matrix $M$ is an $(n^2-1)\times (n^2-1)$  orthogonal matrix.

\subsection{Density Matrices on the Bloch Sphere}\label{density matrix on bloch}

Compared to higher dimensions, the Bloch sphere of a $2$-dimensional Hilbert space has the simplest geometry, since all the points on the surface of the sphere correspond to pure states and the interior of the sphere corresponds to mixed states. This is evident from the two-to-one homeomorphism between $SU(2)$ and $SO(3)$. In higher dimensions, the geometry of the quantum states is more complicated. To illustrate this point, consider the density matrix $\rho$ of an $n$-dimensional system, whose eigenvalues are $\{\lambda_1,\cdots,\lambda_n\}$. The eigenvalues are solutions to the characteristic polynomial of $\rho$ given by 

\begin{equation}\label{characteristic polynomial}
Det|\lambda \mathbb I - \rho| = \prod_{k=1}^n (\lambda-\lambda_k) \mbox{ .}
\end{equation}

\noindent
The coefficients of the polynomial are functions of $Tr(\rho^k)$ where $k\leq n$. This is clear once we expand the right side of  \eref{characteristic polynomial}, 

\begin{equation}\label{characteristic in trace}
\begin{array}{ll}
\prod_{k=1}^n (\lambda-\lambda_k) &= \lambda^n - \lambda^{n-1} \bigg( \sum_j \lambda_j\bigg) + \lambda^{n-2} \bigg( \sum_{j\neq k} \lambda_j\lambda_k \bigg)-\cdots\\& +(-1)^n \bigg(\prod_k \lambda_k \bigg)\\
&= \lambda^n - \lambda^{n-1} Tr(\rho)+\frac{1}{2}\lambda^{n-2}\bigg( Tr(\rho)^2-Tr(\rho^2)\bigg) - \cdots \\&+ (-1)^n Det|\rho| \mbox{ .}
\end{array} 
\end{equation}

\Eref{characteristic in trace} implies that a density matrix can be characterized up to a unitary equivalence by the values of $Tr(\rho^k)$ where $k\leq n$. The trace of powers of $\rho$ are the functions that will determine the geometry of quantum states on the Bloch sphere. In dimension $2$, we only have two relevant values, $Tr(\rho)$ and $Tr(\rho^2)$. By definition $Tr(\rho) =1$, and $Tr(\rho^2)$ determines the eigenvalues of the density matrix. In the Bloch sphere, $Tr(\rho^2)$ translates to $|\vec r|^2$, where $\vec r$ is the image of the density matrix on the Bloch sphere. In general, for $n$-dimensional Hilbert space, we have to consider $n$ trace values which generate $n$ polynomial functions on the Bloch sphere. These polynomials can be generated by replacing the density matrix $\rho$ by $\frac{\mathbb I}{n} + \Lambda \cdot \vec r$ and expanding $Tr(\rho^k)$.

If we are interested in identifying vectors $\vec r$ that correspond to pure state density matrices, we only need to check the three conditions, $Tr(\rho)=1,Tr(\rho^2)=1$ and $Tr(\rho^3)=1$, since these are satisfied if and only if the density matrix corresponds to a pure state as proven in \cite{appleby2007symmetric}. Finally, we write the explicit functions that determine pure states on the Bloch sphere as 

\begin{equation}
\begin{array}{ll}
     Tr(\rho^2) &= Tr\big(\frac{\mathbb I}{n^2} + \frac{2}{n}\Lambda \cdot \vec r +(\Lambda \cdot \vec r)^2\big)\\ 
     &= \frac{1}{n}+2\vec r \cdot \vec r \mbox{ .}
\end{array}
\end{equation}

\noindent
Similarly, for the cube case, we write the following,

\begin{equation}
\begin{array}{ll}
     Tr(\rho^3) &= Tr\big(\frac{\mathbb I}{n^3} + \frac{3}{n^2}\Lambda \cdot \vec r +\frac{3}{n}(\Lambda \cdot \vec r)^2+(\Lambda \cdot \vec r)^3\big)\\ 
     &= \frac{1}{n^2}+\frac{6}{n^2}\vec r \cdot \vec r + \sum_{i,j,k=1}^{n} Tr(\Lambda_i\Lambda_j\Lambda_k) r_ir_jr_k \mbox{ .}
\end{array}
\end{equation}

The triple products $Tr(\Lambda_i\Lambda_j\Lambda_k)$, known as the structure factor of the Gell-Mann matrices, will be denoted by the tensor $d_{ijk}$. After further simplifications, we the two characteristic equations $Tr(\rho^2)=1$ and $Tr(\rho^3)=1$ as

\begin{equation}\label{trace square function}
     |\vec r| = \sqrt{\frac{n-1}{2n}} 
\end{equation}

\noindent
and

\begin{equation}\label{trace functions for purity}
     \sum_{i,j,k=1}^{n^2-1} d_{ijk} r_ir_jr_k = \frac{(n-1)(n-2)}{n^2} \mbox{ }
\end{equation}

\noindent
, respectively.

\subsection{SIC-POVMs on the Bloch Sphere}

Consider the set of $n^2$ normalized vectors $\{|\psi_k\rangle\}$ in $n$-dimensional Hilbert space, where $|\langle\psi_j|\psi_k\rangle|^2=\frac{1}{n+1}$. The corresponding SIC-POVM represented by the vector set is $\{\Pi_k\} =\{\frac{1}{n} |\psi_k\rangle\langle\psi_k|\}$. On the Bloch sphere, these vectors are mapped to $\{\vec v_k\}$ where each of the vectors has a magnitude of $\sqrt{\frac{n-1}{2n}}$, such that 

\begin{equation}\label{SIC on Bloch sphere}
    |\psi_k\rangle\langle\psi_k| = \frac{\mathbb I}{n} +\vec \Lambda \cdot \vec v_k\mbox{ .}
\end{equation}

\noindent
The vectors $\{\vec v_k\}$, like the SIC-POVMs are equiangular. This is derived using \eref{SIC on Bloch sphere} and expanding the expression $Tr(|\psi_j\rangle\langle\psi_j||\psi_k\rangle\langle\psi_k|)=\frac{1}{n+1}$ as  

\begin{equation}
\begin{array}{ll}
Tr(|\psi_j\rangle\langle\psi_j|&|\psi_k\rangle\langle\psi_k|) = Tr(\bigg(\frac{\mathbb I}{n} + \vec v_j \cdot \vec \Lambda\bigg)\bigg(\frac{\mathbb I}{n} + \vec v_k \cdot \vec \Lambda\bigg))\\
&=\frac{1}{n^2}Tr(\mathbb I) + \frac{(\vec v_j+\vec v_k)}{n}.Tr(\vec \Lambda)+ Tr((\vec v_j \cdot \vec \Lambda) (\vec v_k \cdot \vec \Lambda))\\
&= \frac{1}{n}+2\vec v_j \cdot \vec v_k = \frac{1}{n+1} \mbox{ .}
\end{array}
\end{equation}

\noindent
Finally, $\vec v_j \cdot \vec v_k =-\frac{1}{2n (n+1)}$ and the angle between any two vectors is $\cos(\theta_{jk}) = -\frac{1}{n^2-1}$. This shows that the vectors $\{\vec v_k\}$ are the vertices of a regular $(n^2-1)$-simplex. 

In dimension $2$, where all points on the surface of the Bloch sphere correspond to pure state density matrices, a SIC-POVM can be formed by any regular $3$-simplex. In higher dimensions, one can not arbitrarily orient the simplex and form a SIC-POVM. This is because the surface of the Bloch sphere contains points that do not correspond to any physical density matrix and the pure states are the global maxima of the polynomial function shown in \eref{trace functions for purity} on the surface of the Bloch sphere.

By defining the map $f: \mathbb R^{n^2-1} \mapsto \mathbb R$ given by
\begin{equation}\label{trace cube}
f(\vec r) = \sum_{ijk} d_{ijk}r_ir_jr_k \mbox{ ,}
\end{equation}
\noindent
we restate the construction of a SIC-POVM in $n$-dimensional Hilbert space as a problem of orienting a $(n^2-1)$ dimensional regular simplex such that all the vectors $\{\vec v_k\}$ are mapped to the same value of $f(\vec v_k) = \frac{(n-1)(n-2)}{n^2}$. Defining the SIC-POVM in this manner allows us to see the connection between the SIC-POVMs and Knaster's conjecture, which is discussed in the next section.

\section{Proof of the Knaster's Conjecture for $n$ Vertices of an $n$-Simplex}\label{proof of conjecture}

We start by constructing a regular simplex in the $n^2-1$ dimensional real space. This can be done using the Gram-Schmidt process. Start with the one dimensional vectors $V_1=\{(1),(-1)\}$, which is the equiangular vector in the $1$-dimensional vector space. Then, by applying the algorithm $V_n = \{\{\frac{\sqrt{n^2-1}}{n}V_{n-1},\frac{1}{n}\},\{0,...,0,-1\}\}$, we construct $n+1$ equiangular unit vectors. For example, in the $2$-dimensional vectors space, the set of equiangular vectors is $V_2=\{(\frac{\sqrt{3}}{2},\frac{1}{2}),(-\frac{\sqrt{3}}{2},\frac{1}{2}),(0,-1)\}$ and in the $3$-dimensional vector space, the corresponding set is $V_3=\{(\frac{\sqrt{8}}{3}V_2,\frac{1}{3}),(0,0,-1)\}$. By plugging the vectors of $V_2$, we generate the four vectors in $V_3$, which gives the vectors 

\begin{equation}\label{3 simplex}
     \{(\frac{\sqrt{6}}{3},\frac{\sqrt{2}}{3},\frac{1}{3}), (-\frac{\sqrt{6}}{3},\frac{\sqrt{2}}{3},\frac{1}{3}),
    (0,-\frac{\sqrt{8}}{3},\frac{1}{3}),(0,0,-1)\} \mbox{ .}
\end{equation}

The vertices of a regular simplex and a set orthonormal basis vectors have geometric similarities which simplify the proof of Knaster's conjecture. In both cases, orthogonal transformations can be constructed such that any number of chosen vectors in the set are fixed by the transformations and the rest of the vectors form a connected subspace as shown in lemma \ref{fixing linealy independent vectors}.  

\begin{lemma}\label{fixing linealy independent vectors}
Consider a set of vectors $P = \{\vec{p}_1,...,\vec{p}_n\}$ on $S^{n-1}$ where $\vec{p}_i.\vec{p}_j = -\frac{1}{n}$, then there exists a sets of matrices $\{U_1,...,U_{n-1}\}$ where $U_i \in SO(n)$ such that
\begin{equation}
\begin{array}{ll}
U_1\vec{p}_1 &= \vec{p}_1 \mbox{ ,}\\
U_2\vec{p}_2 &= \vec{p}_2 , U_2\vec{p}_1 = \vec{p}_1  \mbox{ ,}\\
U_3\vec{p}_3 &= \vec{p}_3 , U_3\vec{p}_2 = \vec{p}_2 , U_3\vec{p}_1 = \vec{p}_1 \mbox{ ,}\\
\vdots\\
U_{n-1}\vec{p}_{n-1} &= \vec{p}_{n-1} , ... , U_{n-1}\vec{p}_2 = \vec{p}_2  , U_{n-1}\vec{p}_1 = \vec{p}_1 \mbox{ .} 
\end{array}
\end{equation}

\end{lemma}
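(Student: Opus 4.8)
The plan is to exploit the linear independence of the $\vec{p}_k$ and then realize each $U_i$ as a rotation that acts as the identity on the span of the vectors to be fixed and freely on its orthogonal complement. The existence claim itself is then immediate, and the same construction is what will supply the connectedness used in the later sections.

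First I would establish that the full set $\{\vec{p}_1,\dots,\vec{p}_n\}$ is linearly independent. Writing $G$ for the Gram matrix with entries $G_{jk}=\vec{p}_j\cdot\vec{p}_k$, the hypotheses give $G=\frac{n+1}{n}\mathbb{I}-\frac{1}{n}J$, where $J$ is the all-ones matrix. Its eigenvalues are $\frac{1}{n}$ on the all-ones vector and $\frac{n+1}{n}$ with multiplicity $n-1$ on the complement, all strictly positive, so $G$ is positive definite and the vectors form a basis of $\mathbb{R}^n$. In particular, for every $i\le n-1$ the partial set $\{\vec{p}_1,\dots,\vec{p}_i\}$ is linearly independent, so $V_i:=\mathrm{span}\{\vec{p}_1,\dots,\vec{p}_i\}$ has dimension exactly $i$ and its orthogonal complement $W_i:=V_i^{\perp}$ has dimension $n-i\ge 1$.

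Next, for each $i$ I would define $U_i$ through the orthogonal splitting $\mathbb{R}^n=V_i\oplus W_i$: let $U_i$ act as the identity on $V_i$ and as any chosen rotation $R_i\in SO(W_i)\cong SO(n-i)$ on $W_i$. Orthogonality of the two blocks makes $U_i$ orthogonal, and $\det U_i=\det R_i=1$, so $U_i\in SO(n)$; by construction $U_i\vec{p}_j=\vec{p}_j$ for every $j\le i$, which is exactly the required chain of fixed-point conditions. For the property the later argument needs, namely that the \emph{unfixed} vectors can be moved in a connected way, I would note that each $\vec{p}_{i+1},\dots,\vec{p}_n$ is linearly independent from $\vec{p}_1,\dots,\vec{p}_i$ and hence has a nonzero projection onto $W_i$; as $R_i$ ranges over the connected group $SO(W_i)$, the image of each such vector traces out a connected set, and whenever $n-i\ge 2$ one can take $R_i\neq\mathbb{I}$ so that the motion is genuinely nontrivial.

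The only genuinely delicate point is the linear-independence step, since the whole block construction collapses if $\dim V_i<i$; the Gram-matrix eigenvalue computation is precisely what rules this out. It also fixes the boundary case $i=n-1$, where $W_{n-1}$ is one-dimensional, $SO(1)$ is trivial, and $U_{n-1}=\mathbb{I}$ is forced. This is consistent with the underlying geometry: once $n-1$ of the simplex vertices are held fixed the last one is already determined, so there is no remaining room to rotate.
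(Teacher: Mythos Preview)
Your proof is correct and follows essentially the same approach as the paper: both fix $\vec{p}_1,\dots,\vec{p}_i$ by taking $U_i$ to be the identity on $\mathrm{span}\{\vec{p}_1,\dots,\vec{p}_i\}$ and an arbitrary element of $SO(n-i)$ on the orthogonal complement, the paper realizing this concretely via Gram--Schmidt coordinates and your version stating it abstractly via the splitting $V_i\oplus W_i$. Your Gram-matrix eigenvalue computation makes the linear-independence step explicit, which the paper leaves implicit in its application of Gram--Schmidt.
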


\begin{proof}
    
Let's take the mutually orthogonal vectors $\{\vec{v}_1,...,\vec{v}_n\}$ constructed from the elements of $P$ using the Gram-Schmidt algorithm. Then the elements of $P$ take the following form,

\begin{equation}
\begin{array}{ll}
\vec{v}_1 = \vec{p}_1 \mbox{ ,}\\
\vec{v}_2 +\frac{(\vec{v}_1.\vec{p}_2)}{|\vec{v}_1|^2}\vec{v}_1 = \vec{p}_2 \mbox{ ,}\\
\vec{v}_3 +\frac{(\vec{v}_2.\vec{p}_3)}{|\vec{v}_2|^2}\vec{v}_2 + \frac{(\vec{v}_1.\vec{p}_3)}{|\vec{v}_1|^2}\vec{v}_1 = \vec{p}_3 \mbox{ ,}\\
\vdots\\
\vec{v}_n +\frac{(\vec{v}_{n-1}.\vec{p}_n)}{|\vec{v}_{n-1}|^2}\vec{v}_{n-1} + ... + \frac{(\vec{v}_2.\vec{p}_n)}{|\vec{v}_2|^2}\vec{v}_2 + \frac{(\vec{v}_1.\vec{p}_n)}{|\vec{v}_1|^2}\vec{v}_1 = \vec{p}_n \mbox{ .}
\end{array}
\end{equation}

\noindent
By using vector set $\{\hat{v}_1,...,\hat{v}_n\}$ as our basis, we construct the set of matrices $U_i$ as 
\begin{equation}\label{selective unitary}
U_i=\{\left[
   \begin{array}{cc}
   I & 0  \\
   0 & u_k   \\
   \end{array} \right]
   : \forall u_i \in SO(n-i)\} \mbox{ .}
\end{equation}

\noindent
where $I$ is a $k$-dimensional identity matrix. The action of the matrices $U_k$ on the equiangular vectors $P$ fixes the first $k$ vectors. In addition, matrices of the form $U_k$ exist which map each of the remaining vectors to each other. This shows that the orbits of these vectors are equivalent.

\end{proof}

Given the equiangular vectors, $V_{n^2-1}$, and the corresponding orthogonal matrices $U_k$ which fixed the first $k$ vectors, the orbit of the remaining $(n-k)$ vectors is defined as the collection of points having a fixed inner product with all the $k$ vectors. The subspaces can be defined as the intersection of the subspaces with a fixed inner product to each vector of index $1$ up to $k$. Before defining these subspaces, let us define a representation for a surface. A surface defined by a function $g(\vec x)-g_0=0$ is represented as $\mathcal S = \{\vec x \in \mathbb R^N | g(\vec x) - g_0=0\}$. we will represent such surface in a slightly shorter form $\{g(\vec x)-g_0=0\}_N$.

\begin{define}\label{series of j}
Given the set of equiangular vectors $\{\vec{p}_1,\vec{p}_2,...,\vec{p}_n\}$ on $S^{n-1}$ where $\vec{p}_i.\vec{p}_j = -\frac{1}{n}$, define the subspace $J_i = S^{n-1}\bigcap^{i-1}_{j=1}{\{\vec{x}.\vec{p}_j+\frac{1}{n}=0\}}_n$ and $J_1 = S^{n-1}$.
\end{define}

The subspaces $J_i$ correspond to the orbits formed by the matrices $U_{i-1}$. This can be understood from the fact that the inner product of any vector in $J_i$ and the vectors $\{\vec p_1,\cdots,\vec p_{i-1}\}$ is  $\frac{1}{n}$. A key point to note from the definition is that $J_1 \supset J_2 \supset ...\supset  J_n $, which will be important in the proof of theorem \ref{kakutani theorem}. Next, we show that each of the surfaces $J_i$ is isomorphic to the sphere $S^{n-i}$ using proof by induction. 

\begin{lemma}\label{J isomorphism}
$J_i \cong S^{n-i}:1 \leq i \leq n$

\begin{proof}

For $i=1$, by definition $J_1=S^{n-1}$.

\noindent
For $i=2$, the set is given by,

\begin{equation}\label{J2 surface}
J_2 = S^{n-1} \cap \{\vec{x}.\vec{p}_1+\frac{1}{n} = 0\}_n \mbox{ .}
\end{equation}

\noindent
Let $\{\hat{e}_1,\dots,\hat{e}_{n-1}\}$ be a set of orthogonal vectors such that $\hat{e}_i.\vec{p}_1 = 0$. Then, we can write the vector $\vec{x}$ as,

\begin{equation}
\vec{x} = \sum_{i=1}^{n-1}{\alpha_i\hat{e}_i}-\frac{\vec{p}_1}{n} , \alpha_i \in \mathbb R \mbox{ .}
\end{equation}

\noindent
This form of $\vec{x}$ satisfies the equation $\vec{x}.\vec{p}_1+\frac{1}{n} = 0$. The vector $\vec x$ is also an element of $S^{n-1}$, meaning $|\vec x|^2 = 1$. Using the normalization conditions of $\vec x$ and $\vec p_1$ in 

\begin{equation}
|\vec{x}|^2 = \sum_{i=1}^{n-1}{\alpha_i^2}+\frac{|\vec{p}_1|^2}{n^2} \mbox{ ,}
\end{equation}

\noindent
we calculate the sum of $\alpha_i^2$ as 

\begin{equation}\label{s(n-2)}
\sum_{i=1}^{n-1}{\alpha_i^2} = \big(\frac{n^2-1}{n^2}\big) \mbox{ .}
\end{equation}

\noindent
Clearly, \eref{s(n-2)} defines a sphere $S^{n-2}$, and we can form a homeomorphism between the subspace $J_2$ and $S^{n-2}$ defined as $\phi_2 : J_2 \mapsto S^{n-2}, \phi_2(\vec x) = \vec{x}+\frac{\vec{p}_1}{n}$ which proves $J_2 \cong S^{n-2}$.

For $i=3$, we start by writing the subspace $J_3$ and an intersection of $J_2$ and $\{\vec{x}.\vec{p}_2 + \frac{1}{n} = 0\}$, which becomes

\begin{equation}
J_3 = J_2 \cap \{\vec{x}.\vec{p}_2 + \frac{1}{n} = 0\} \mbox{ .}
\end{equation}

By applying the map $\phi_2$ on the subspace $J_3$, we generate the subspace $\phi_2(J_3)$ which is homeomorphic to the original space $J_3$. 
The resulting subspace of $S^{n-1}$ is $\phi_2(J_3) = \phi_2(J_2 \cap \{\vec{x}.\vec{p}_2 + \frac{1}{n} = 0\}) = \phi_2(J_2) \cap \phi_2(\{\vec{x}.\vec{p}_2 + \frac{1}{n} = 0\})$. The subspace $\phi_2(J_2) $ is a $n-1$ dimensional sphere of radius $\sqrt{\frac{n^2-1}{n^2}}$. Next, we need to write $J_3$ in the same form as $J_2$, specifically, as an intersection of a sphere and a surface defined by some continuous equation. This is done by applying the map $\phi_2$ on the vectors $\vec x$ and $\vec p_2$. Let's define $\vec{x}' = \vec{x} + \frac{\vec{p}_1}{n}$ and $\vec{p'}_2 = \vec{p}_2 + \frac{\vec{p}_1}{n}$. By replacing the vectors in the surface equation $\vec{x}.\vec{p}_2 + \frac{1}{n} = 0$ by the vectors $\vec x'$ and $\vec{p'}_2$, we form the subspace $\phi_2(J_3)$ as

\begin{equation}\label{transformed J3}
\begin{array}{ll}
\phi_2(J_3) &= \phi_2(J_2) \cap \big\{(\vec{x'} - \frac{\vec{p}_1}{n})\cdot(\vec{p'}_2 - \frac{\vec p_1}{n}) + \frac{1}{n} = 0\big\}\\
&= \phi_2(J_2) \cap \big\{(\vec{x'}\cdot\vec{p'}_2 - \frac{\vec{p}_1}{n}\cdot\vec{p'}_2 -\vec{x'}\cdot\frac{\vec{p}_1}{n}+\frac{\vec{p}_1}{n}\cdot\frac{\vec{p}_1}{n}) + \frac{1}{n} = 0\big\} \mbox{ .}
\end{array}
\end{equation}

\noindent
Since $\vec{p}_1.\vec{p'}_2 = 0 $ and for $\vec x \in J_2$ we have $\vec{p}_1.\vec{x'} = 0$, \eref{transformed J3} simplifies to 

\begin{equation}
\begin{array}{ll}
\phi_2(J_3) &= \phi_2(J_2) \cap \big\{\vec{x'}.\vec{p'}_2 +\frac{1}{n^2} + \frac{1}{n} = 0\big\}\\
&= \phi_2(J_2) \cap \big\{\vec{x'}.\vec{p'}_2 +\big(\frac{n+1}{n^2}\big) = 0\big\}\\
&= \phi_2(J_2) \cap \big\{\vec{x'}.\vec{p'}_2 +\big(\frac{n^2-1}{n^2}\big)\frac{1}{n-1} = 0\big\} \mbox{ .}
\end{array}
\end{equation}

\noindent
The magnitude of the  vectors $\vec x'$ and $\vec{p'}_2$ is $\sqrt{\frac{n^2-1}{n^2}}$ as is the radius of the sphere $\phi_2(J_2)$. Thus, we define a normalization map $r_2: R^{n-1} \mapsto R^{n-1}, r_2(\vec{x'}) = \sqrt{\frac{n^2}{n^2-1}}\vec{x'} $ and apply it on $\phi_2(J_3)$. The map $r_2(\phi_2(J_2))$ simply gives $S^{n-2}$ and for the equation of the surface, we replace $\vec x'$ and $\vec{p'}_2$ by $r(\vec{x'})\sqrt{\frac{n^2-1}{n^2}}$ and $r(\vec{p'}_2)\sqrt{\frac{n^2-1}{n^2}}$, respectively. The resulting subspace is, 

\begin{equation}\label{reduced j1}
(r_2 \circ \phi_2)(J_3) = S^{n-2} \cap \big\{r(\vec{x'})\cdot r(\vec{p'}_2) +\frac{1}{n-1} = 0\big\} \mbox{ .}
\end{equation}

Since $r_2$ is a homeomorphism, the map $r_2 \circ \phi_2$ is also a homeomorphism. Note that $(r_2 \circ \phi_2)(J_3)$ has the same form as $J_2$ with one less dimension. Therefore, we can go through the same steps as $J_2$ to form a homeomorphism $\phi_3: (r_2 \circ \phi_2)(J_3) \mapsto S^{n-3}, \phi_3(x) = \vec{x} + \frac{\vec{p'}_2}{n-1} $, thereby showing $J_3 \cong S^{n-3}$. Finally, we repeat the process for each surface $J_i$ to prove that $J_i \cong S^{n-i}$ for all $1 \leq i \leq n$.

\end{proof}

\end{lemma}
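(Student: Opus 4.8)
The plan is to bypass the dimension-by-dimension induction of the excerpt and instead treat each $J_i$ directly as the intersection of the unit sphere $S^{n-1}$ with an affine subspace, then invoke the elementary fact that such an intersection is again a round sphere whenever the affine subspace meets the open unit ball. Concretely, $J_i = S^{n-1}\cap A_i$, where $A_i = \{\vec x\in\mathbb R^n : \vec x\cdot\vec p_j = -\frac1n,\ 1\le j\le i-1\}$ is the solution set of $i-1$ affine-linear equations, and $A_1=\mathbb R^n$. So the whole lemma reduces to understanding the geometry of a sphere sliced by hyperplanes.

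First I would verify that $\vec p_1,\dots,\vec p_{i-1}$ are linearly independent, so that $A_i$ is a genuine affine subspace of dimension $n-i+1$. Their Gram matrix is $\frac{n+1}{n}I - \frac1n\mathbf 1\mathbf 1^{\mathsf T}$ on $\mathbb R^{i-1}$, whose eigenvalues are $\frac{n+1}{n}$ (with multiplicity $i-2$) and $\frac{n-i+2}{n}$; both are strictly positive for $i\le n$, so the Gram matrix is positive definite and the vectors are independent. Let $W_i$ denote the direction space of $A_i$, i.e. the orthogonal complement of $\mathrm{span}(\vec p_1,\dots,\vec p_{i-1})$, of dimension $n-i+1$.

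Next I would locate the foot of the perpendicular from the origin to $A_i$. By the symmetry among the constraints this center takes the form $\vec c_i = \lambda\sum_{j=1}^{i-1}\vec p_j$, and imposing $\vec c_i\cdot\vec p_k=-\frac1n$ forces $\lambda = -\frac{1}{n-i+2}$, whence a short computation gives $|\vec c_i|^2 = \frac{i-1}{n(n-i+2)}$. Writing an arbitrary $\vec x\in A_i$ as $\vec x = \vec c_i + \vec w$ with $\vec w\in W_i$, the orthogonality $\vec c_i\perp W_i$ yields the Pythagorean identity $|\vec x|^2 = |\vec c_i|^2 + |\vec w|^2$, so the membership $\vec x\in S^{n-1}$ is equivalent to $|\vec w|^2 = 1-|\vec c_i|^2 =: R_i^2$. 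This realizes $J_i$ as the sphere of radius $R_i$ inside the $(n-i+1)$-dimensional space $W_i$, and the affine map $\vec x\mapsto R_i^{-1}(\vec x-\vec c_i)$ is then an explicit homeomorphism from $J_i$ onto the unit sphere of $W_i$, that is, onto $S^{n-i}$; both this map and its inverse are manifestly continuous. The endpoints are recovered automatically: $i=1$ gives $R_1=1$ and $J_1=S^{n-1}$, while $i=n$ gives $\dim W_n=1$ and $J_n=S^0$, a two-point set.

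The one genuinely load-bearing step is the strict non-degeneracy bound $|\vec c_i|<1$, which must hold uniformly in $i$ so that every $J_i$ is a true sphere rather than empty or a single point. This is precisely the inequality $n(n-i+2)-(i-1) = n^2-ni+2n-i+1 > 0$, and the tightest case is $i=n$, where $R_n^2 = \frac{n+1}{2n}$ is smallest yet still positive; so the bound survives across the whole range $1\le i\le n$. I expect this uniform positivity to be the only real obstacle, everything else being the routine orthogonal-decomposition identity $|\vec x|^2=|\vec c_i|^2+|\vec w|^2$. If one prefers to remain closer to the paper's inductive presentation, the same center $\vec c_i$ and radius $R_i$ feed directly into the explicit maps $\phi_i$, but the argument above dispenses with the induction entirely.
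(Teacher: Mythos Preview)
Your argument is correct and complete. The Gram-matrix eigenvalue computation, the center $\vec c_i=-\frac{1}{n-i+2}\sum_{j<i}\vec p_j$, the radius formula $R_i^2=\frac{(n+1)(n+1-i)}{n(n-i+2)}$, and the monotonicity check pinning the worst case at $i=n$ all go through as written.

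The route, however, is genuinely different from the paper's. The paper proceeds by an inductive peeling: it builds an explicit translation $\phi_2(\vec x)=\vec x+\frac{\vec p_1}{n}$ and a rescaling $r_2$, shows that $(r_2\circ\phi_2)(J_3)$ has exactly the shape of $J_2$ with $n$ replaced by $n-1$, and then iterates. Your approach collapses this recursion into a single step by computing the global center $\vec c_i$ and invoking the elementary fact ``sphere $\cap$ affine flat $=$ lower sphere'' once. This is cleaner and avoids the bookkeeping of the chain $\phi_2,\phi_3,\dots$. What the paper's version buys is that the map $\phi_2$ is reused verbatim in the proof of Theorem~\ref{kakutani theorem}, where the inductive hypothesis is applied on $J_2\cong S^{n-2}$ and the simplex angles must be tracked through $\phi_2$; having that map already in hand with its effect on inner products worked out is convenient there. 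Your single affine map $\vec x\mapsto R_i^{-1}(\vec x-\vec c_i)$ would serve the same purpose, but you would need to record separately how it transforms the remaining $\vec p_k$ and their mutual angles, which the paper's recursion does implicitly.
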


Finally, we are ready to show that, given any continuous function on $S^n$, there exists a regular simplex for which at least $n$ of its vertices are mapped to a unique value by the function. 

\begin{theorem}\label{kakutani theorem}
Let $f$ be a continuous function defined on the unit sphere $S^{n-1}$ such that $f : S^{n-1} \mapsto \mathbb R$. There exists a set of points $\{p_1,p_2,...,p_n\}$ on $S^{n-1}$ such that,

\begin{equation}
\vec{p}_i.\vec{p}_j = -\frac{1}{n}  
\end{equation}
and
\begin{equation}
f(p_1) = f(p_2) = ... = f(p_n) = f_0 \mbox{ , } f_0 \in \mathbb R \mbox{ ,}
\end{equation}

\noindent
where $\vec{p}_i$ represents the vector from the origin to the point $p_i$.

\end{theorem}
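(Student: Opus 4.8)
The plan is to prove the statement by induction on $n$, using the tower $J_1 \supset J_2 \supset \cdots \supset J_n$ together with the homeomorphisms $J_i \cong S^{n-i}$ of Lemma \ref{J isomorphism}. The point is that the inner-product normalization has been arranged so that the reduction is exact: ``$n$ points on $S^{n-1}$ at mutual angle $-1/n$'' collapses, once the first vertex is fixed, to ``$n-1$ points on $S^{n-2}$ at mutual angle $-1/(n-1)$'', which is precisely the theorem one step down. So I would set up the induction with the statement indexed by $m$ (namely $m$ points on $S^{m-1}$ with inner product $-1/m$) and aim to deduce the case $m=n$ from the case $m=n-1$.

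First I would dispose of the base case. For $m=1$ there is nothing to prove, and for $m=2$ the two vertices sit on $S^1$ at angular distance $2\pi/3$; writing $h(\theta)=f(\theta+2\pi/3)-f(\theta)$ one checks that $h(\theta)+h(\theta+2\pi/3)+h(\theta+4\pi/3)=0$ identically, so $h$ cannot keep a constant sign and the intermediate value theorem yields a zero, i.e.\ a pair of vertices with equal value. This ``the orbit-sum vanishes, hence a sign change must occur'' mechanism is the elementary engine I expect to reuse when closing the induction.

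For the inductive step I would fix a candidate first vertex $p_1\in S^{n-1}$. Any admissible completion has its remaining vertices in $J_2=J_2(p_1)$, and composing the maps $\phi_2$ and $r_2$ from the proof of Lemma \ref{J isomorphism} identifies $J_2$ with $S^{n-2}$ while converting the constraint $\vec p_i\cdot\vec p_j=-1/n$ (for $i,j\ge 2$) into $-1/(n-1)$, as recorded in \eref{reduced j1}. Pulling $f|_{J_2}$ back through this homeomorphism gives a continuous $\tilde f_{p_1}\colon S^{n-2}\to\mathbb R$, and the induction hypothesis supplies vertices $p_2,\ldots,p_n\in J_2$ with $f(p_2)=\cdots=f(p_n)=:c(p_1)$. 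What remains is to choose $p_1$ so that $f(p_1)=c(p_1)$ as well, i.e.\ to kill $H(p_1)=f(p_1)-c(p_1)$; here I would invoke the rotations of Lemma \ref{fixing linealy independent vectors}, which act transitively on the vertices sharing a common orbit sphere, to build an orbit-sum identity among the $f(p_i)$ and rerun the base-case sign-change argument on the completed simplex.

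The hard part will be the gluing in the inductive step. The induction hypothesis asserts only the \emph{existence} of a leveled sub-simplex for each $p_1$, not a value $c(p_1)$ that is single-valued or continuous in $p_1$, so $H$ is a priori a relation rather than a function; the orbit-sum/sign-change closing must therefore be promoted to a genuine topological statement. I expect the real effort to lie either in extracting a continuous selection (using connectedness of the orbit space from Lemma \ref{fixing linealy independent vectors} together with compactness of $S^{n-1}$) or, failing that, in recasting $H$ as an equivariant map on the configuration space of ordered regular simplices (homeomorphic to $SO(n)$) and forcing a zero via a degree- or Borsuk--Ulam-type argument driven by the symmetric-group symmetry permuting the vertices. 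A subsidiary technical nuisance I would flag is that an $n$-cycle on the vertices is realized inside $SO(n)$ only when $n$ is odd, so the orbit-sum identity must be phrased in terms of the even permutations actually available as rotations; checking that this restricted symmetry still forces the vanishing is exactly where the argument needs care.
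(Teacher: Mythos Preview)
Your inductive skeleton---fix $p_1$, reduce via $J_2\cong S^{n-2}$ (Lemma \ref{J isomorphism}) to the statement one dimension down, and then try to force $f(p_1)=c(p_1)$---is exactly the paper's strategy. The base case on $S^1$ is likewise handled by an IVT argument in both.

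Where you diverge is in the closing step. The paper uses a simpler device than your orbit-sum/equivariant plan: it chooses a continuous path $\theta\mapsto p_1(\theta)$ in $S^{n-1}$ from a global minimizer $p_a$ of $f$ to a global maximizer $p_b$. Then $f(p_1(0))=\min f\le c(p_1(0))$ and $f(p_1(1))=\max f\ge c(p_1(1))$ automatically, so $H(\theta)=f(p_1(\theta))-c(p_1(\theta))$ changes sign between the endpoints and the intermediate value theorem is invoked. No vertex permutations are needed, so your parity worry about realizing an $n$-cycle in $SO(n)$ never arises.

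The gap you flag, however, is real and is \emph{not} resolved by the paper either: the induction hypothesis yields only existence of a leveled sub-simplex for each $p_1(\theta)$, not a continuous (or even single-valued) choice $\theta\mapsto c(p_1(\theta))$, and the paper simply asserts the IVT conclusion without supplying a selection argument. So on this point your proposal is more honest than the paper's own proof; the paper's min-to-max path gives the cleaner endpoint inequalities, but neither version closes the continuity issue without further work of the kind you sketch.
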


\begin{proof}
We start by noting that, since $f$ is a continuous function and $S^{n-1}$ is a compact space, the value of $f(p)$, $\forall p \in S^{n-1}$ is continuous and bounded, i.e. $a \leq f(p) \leq b$. Let's take two points $p_a$ and $p_b$ where $p_a,p_b \in S^{n-1}$, such that $f(p_a)=a$ and $f(p_b)=b$ and define a set of matrices $\mu(\theta) = \{\mu(\theta) \in SO(n), \mu(1)\vec{p_a}=\vec{p_b}, \theta \in [0,1] \}$. We then construct a set of equiangular vectors $\Psi = \{p_1,p_2,...,p_n\}$ on $S^{n-1}$ by using the method shown in beginning of this section, such that $\vec{p}_i.\vec{p}_j = -\frac{1}{n}$ and $p_1 = p_a$. When we apply $\mu(\theta)$ on $\Psi$ we get the set $\Psi(\theta) = \{p_1(\theta),p_2(\theta),...,p_n(\theta)\}$ and since $\mu(\theta) \in SO(n)$ the angle between the vectors $\vec{p}_i(\theta)$ are invariants.

First, let's assume that the theorem holds for any continuous function defined on $S^{n-2}$. Then we can construct a set of $n-1$ points $\{r_1,...,r_{n-1}\}$ where $\vec{x}_i.\vec{x}_j = -\frac{1}{n-1}$, such that $\forall k, f(\vec r_k)=f_0$. For a given $p_1(\theta)$, we construct a subspace $J_2$ defined in def \ref{series of j}. As shown in lemma \ref{J isomorphism}, $J_2 \cong S^{n-2}$. By using the map $\phi_2$ introduced in the proof of lemma \ref{J isomorphism}, we can map the vectors $p_k,2\leq k \leq n$, to unit length vectors in $(n-1)$-dimensional sphere, where

\begin{equation}
    \phi_2(\vec p_j)\cdot\phi_2(\vec p_k) = \bigg(-\frac{1}{n-1}\bigg)|\phi_2(\vec p_j)||\phi_2(\vec p_k)| \mbox{ .}
\end{equation}

\noindent
The corresponding function on the subspace becomes $f(\vec p_k) \mapsto f\circ \phi^{-1}(\vec{p'}_k)$, where $\vec{p'}_k = \phi_2(\vec p_k)$. Based on our assumption, there exists an orthogonal matrix $u_1\in SO(n-2)$ such that $f\circ\phi^{-1}(u_1\vec{p'}_k) = f_0$ is a constant for all $k\neq 1$. The orthogonal matrix $u_1$ can be used to form an orthogonal matrix $U_1$ in $SU(n-1)$ as shown in \eref{selective unitary}. In other words, by taking $\mu(\theta) = U_1$, we construct a set of vectors $\Psi(\theta) = \{p_1(\theta),p_2(\theta),...,p_n(\theta)\}$, where $f(p_2(\theta)) = f(p_3(\theta)) = ... = f(p_n(\theta)) = f_\theta$. Since $f(p_1(\theta))$ goes from $a$ to $b$ continuously as $\theta$ goes from 0 to 1 and $\forall \theta$ $a \leq f_\theta \leq b$, there must exist a point $\theta$ for which $f(p_1(\theta)) = f(p_2(\theta)) =...= f(p_n(\theta)) = f_\theta$. We can then conclude that, if the theorem holds for an arbitrary continuous function on $S^{n-2}$, it must also be true for an arbitrary function on $S^{n-1}$.

Finally, we need to show the theorem holds for $S^1$ to complete the proof. For $S^1$, we define the set of points $\Psi(\theta) = \{p_1(\theta),p_2(\theta)\}$ where $f(p_1(0)) = a$ and $f(p_1(1)) = b$ and $a \leq f(p_2(\theta)) \leq b$. Since $f$ is a continuous function, there exists a point $\theta$ where $f(p_1(\theta)) = f(p_2(\theta))$. 

This concludes the proof that for any continuous function on $S^{n-1}$, there exists a set of equiangular vectors $\{\vec{p}_1,\vec{p}_2,...,\vec{p}_n\}$ on $S^{n-1}$ where $\vec{p}_i.\vec{p}_j=-\frac{1}{n}$ and $f(\vec p_1)=f(\vec p_2)=...=f(\vec p_n)$.
\end{proof}

Extending the theorem to all of the $n+1$ vertices of the simplex requires further analysis of the specific function in question. One simple example is the function $f(\vec r) = 3x^2y-y^3$ on $S^1$. Since the function has $D_6$ symmetry group, we can generate a $2$-simplex where all $3$ vertices map to the same value of $f(\vec r)$. Another general class of functions are ones where $f_0$ takes discrete values. Such functions necessarily form a subspace $J_n$ where all the vectors map to $f_0$. As a result, $n+1$ of the vertices will map to the same value $f_0$.

\subsection{Generalized SIC-POVMs}

As shown in \eref{trace cube}, a vector $\vec r$ on the surface of the Bloch sphere corresponds to a pure state density matrix if and only if $\sum_{ijk} d_{ijk}r_ir_jr_k = \frac{(n-1)(n-2)}{n^2}$. Let the function $f(\vec r)$ be the polynomial function $ \sum_{ijk} d_{ijk}r_ir_jr_k$. Since the function is continuous, theorem \ref{kakutani theorem} shows that in the Bloch sphere of the $n$-dimensional Hilbert space, a regular $(n^2-1)$-simplex exists such that all but one of its vertices satisfy $f(\vec p_k) = f_0$ for some constant $f_0 \in [-\frac{(n-1)(n-2)}{n^2},\frac{(n-1)(n-2)}{n^2}]$. A SIC-POVM on the other hand corresponds to a set of $n^2$ vectors $\vec r_k$ on the Bloch sphere, where $f(\vec r_k) = \frac{(n-1)(n-2)}{n^2}$ for all $n^2$ vectors. Consequently, the SIC existence problem can be equivalently stated as the following two questions. First, can the value $f_0$ take the maximum value of $\frac{(n-1)(n-2)}{n^2}$? Second, is the last vertex of the simplex, which can be written in its matrix form as

\begin{equation}\label{last vertex of the simplex}
    \Pi_{n^2} = n\mathbb I - \sum_k^{n^2-1} \Pi_k \mbox{, where } \Pi_k = \frac{1}{n}\mathbb I + \vec \Lambda \cdot \vec p_k \mbox{ ,}
\end{equation}
\noindent
a pure state?

Regarding the first question, the function $f(\vec r)$ was chosen to be the simplest among all the functions derived from $Tr(\rho^k)$. In principle, the question can be expressed for any of the $\{Tr(\rho^k), k>2\}$ functions, which has the maximum value if $\rho$ is a pure state density matrix. 

In general, theorem \ref{kakutani theorem} shows that some orientation of the simplex exists where all the vertices correspond to density matrices with the same $Tr(\rho^k)$ value. In dimensions greater than 3, the existence of SIC-POVMs requires that the theorem \ref{kakutani theorem} be valid for continuous values of $f_0$. This result is particularly interesting in the $3$-dimensional Hilbert space, where $Tr(\rho^2)$ and $Tr(\rho^3)$ uniquely identify a density matrix up to a unitary transformation. If the value of $f_0$ is a maximum value of the trace functions, then there exist continuous orientations of the simplex for continuous values of $f_0$, as shown in theorem \ref{continuity proof}. In dimension $3$, the continuous value of $f_0$ means that a set of $8$ density matrices, which are equivalent up to unitary transformations can be constructed. We confirmed the results numerically and found that, in dimensions 3 and 4, simplexes exist such that all the $n^2$ vertices are mapped to some values $f_0$ on the functions $Tr(\rho^3)$ and $Tr(\rho^4)$.

\begin{theorem}\label{continuity proof}
    Let $V$ be a set of $(n^2-1)$ vectors $\{\vec p_k\}$ on the Bloch sphere of an $n$-dimensional Hilbert space where $n\geq3$, such that $\vec p_j \cdot \vec p_k = -\frac{1}{n^2-1}|\vec p_j||\vec p_k|$ and $\forall k, f(\vec p_k) = f_0$, for the continuous function $f$ given in \eref{trace cube}. Then, a SIC-POVM exists in $n$-dimensional Hilbert space only if the set of vectors $V$ exists for a continuous value of $f_0$. 
\end{theorem}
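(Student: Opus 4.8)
The plan is to prove the implication directly: assume a SIC-POVM exists and deduce that the balanced configuration $V$ can be realized for a whole interval of values $f_0$, topping out at the maximum $f_{\max} := \frac{(n-1)(n-2)}{n^2}$ of $f$ on the sphere. The engine of the argument will be the equiangularity-preserving rotations built in Lemma \ref{fixing linealy independent vectors} together with the sphere isomorphisms $J_i \cong S^{n^2-1-i}$ of Lemma \ref{J isomorphism}, organized so that a single continuous family of balanced simplices is produced and $f_0$ becomes a continuous function along it. The interval claim then reduces to connectedness of that family plus a non-degeneracy (non-constancy) statement.

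First I would extract the extreme configuration from the SIC hypothesis. A SIC-POVM gives $n^2$ equiangular Bloch vectors $\{\vec v_k\}$ on $S^{n^2-2}\subset\mathbb R^{n^2-1}$, each a pure state, so by \eref{trace cube} and the purity discussion every $f(\vec v_k)$ equals the global maximum $f_{\max}$. Discarding one vertex leaves $n^2-1$ equiangular vectors all mapping to the common value $f_0 = f_{\max}$, so the set $V$ is realized at the top of the admissible range and the completing vertex \eref{last vertex of the simplex} is itself pure. This places the extreme point $f_0 = f_{\max}$ in the image of the ``common value'' map.

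Next I would generate balanced configurations with $f_0 < f_{\max}$ and connect them to the SIC configuration without ever breaking balance. Following the proof of Theorem \ref{kakutani theorem}, I distinguish one vertex $\vec p_1$ and act with a one-parameter family $\mu(\theta)\in SO(n^2-1)$ that moves it; for each $\theta$ I fix $\vec p_1(\theta)$ using Lemma \ref{fixing linealy independent vectors} and, via $J_2\cong S^{n^2-3}$ of Lemma \ref{J isomorphism} and the inductive re-equalization of Theorem \ref{kakutani theorem}, re-balance the remaining $n^2-2$ vertices at a common value $f_\theta$. Because $SO(n^2-1)$ is connected and the fibers supplied by Lemma \ref{J isomorphism} are connected spheres, this exhibits a \emph{connected} family of balanced simplices on which $\theta\mapsto f_\theta$ is continuous. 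The SIC places this family at $f_\theta = f_{\max}$, so once I know $f_\theta$ is not identically $f_{\max}$ the image is a connected subset of $\mathbb R$ with more than one point, i.e.\ a nondegenerate interval terminating at $f_{\max}$ — which is exactly the assertion of Theorem \ref{continuity proof}.

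The hard part will be the non-degeneracy step: certifying that $f_\theta$ is genuinely non-constant as the deformation leaves the SIC point, rather than staying pinned at $f_{\max}$. Since $f_{\max}$ is the maximum of $f$ on the sphere, the pure-state locus is a critical set of $f$, so the first-order variation along the deformation vanishes and I must pass to second order to show that pushing the equiangular frame off the pure-state manifold strictly lowers the common value. I would argue that the deformation direction furnished by Lemma \ref{fixing linealy independent vectors} has a nonzero component transverse to the pure-state manifold at the retained vertices, and that the equiangularity constraint forbids all $n^2-1$ vertices from remaining simultaneously on the lower-dimensional pure-state locus away from the SIC point; this forces $f_\theta$ to decrease continuously from $f_{\max}$ and delivers the interval. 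Controlling this transversality uniformly — and thereby excluding a degenerate ``stuck'' family — is where I expect the real work to lie.
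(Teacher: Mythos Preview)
Your architecture --- start at the SIC, deform while re-balancing, and certify that the common value $f_\theta$ is not pinned at $f_{\max}$ --- matches the paper's logic. The divergence is entirely in how the non-degeneracy step is handled, and this is where your proposal remains a sketch while the paper is concrete.

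The paper does not use any second-order variation or transversality to the pure-state manifold. Instead it localizes the question to the innermost stratum $J_{n^2-2}\cong S^1$, the circle carrying the last three SIC vertices $\rho_a,\rho_b,\rho_c$. If $f_0$ were stuck at $f_{\max}$, then $f$ would have to be constant along this whole circle. Parametrizing the circle as $\Omega(\theta)=\cos\theta\,\vec e_1+\sin\theta\,\vec e_2+\vec r_0$ in the matrix picture and expanding $Tr\bigl((\tfrac{1}{n}\mathbb I+\Omega)^3\bigr)$ directly produces an expression $A+B\cos(3\theta)$, with $B$ depending on $n$ and on the cosine $\alpha$ of the phase of the triple product $Tr(\rho_a\rho_b\rho_c)$. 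Setting $B=0$ forces $\alpha=\tfrac12\sqrt{n+1}\,(n-2)$, which lies outside $[-1,1]$ for every $n\ge 4$; for $n=3$ it gives $\alpha=-1$, and the paper excludes this for at least one triple of SIC projectors via the identity $\sum_{r,s,t}Tr(\Pi_r\Pi_s\Pi_t)=n^4$. So the non-constancy is obtained by a short trace computation rather than by any differential-geometric argument.

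Your transversality route may be workable, but as written it defers exactly the content of the theorem: the assertion that ``equiangularity forbids all $n^2-1$ vertices from remaining on the pure-state locus away from the SIC point'' is equivalent to ruling out a one-parameter pure equiangular $(n^2-1)$-frame, which is precisely what the circle computation above settles. There is also a smaller technical gap: your continuity of $\theta\mapsto f_\theta$ presumes a continuous selection of the re-balancing rotation produced inductively in Theorem~\ref{kakutani theorem}, but that rotation arises from an intermediate-value existence argument, not a canonical construction, so continuous dependence on $\theta$ needs its own justification. The paper sidesteps both issues by working on one explicit circle.
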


\begin{proof}
We start by assuming that theorem \ref{kakutani theorem} holds for a discrete value of $f_0 = \frac{(n-1)(n-2)}{n^2}$, Which is the necessary condition for purity of a density matrix. Let the set of $n^2$ vectors $\{\vec p_k\}$ form the SIC-POVM. Consider the subspace $J_{n^2-2}$ generated by the vectors $\{\vec p_k\}$, which as shown in \ref{J isomorphism} is isomorphic to $S^1$. Based on our assumption, $f(\vec x) = f_0$ for all vector $\vec x \in J_{n^2-2}$. The subspace $J_{n^2-2}$ contains the last $3$ vectors of the set. Let the three density matrices be $\{\rho_a,\rho_b,\rho_c\}$, which correspond to the vectors $\{\vec p_{n^2-2},\vec p_{n^2-1},\vec p_{n^2}\}$ on the Bloch sphere, respectively. 

Next, we will construct the circle $J_{n^2-2}$ by using the vectors. For computational simplicity, we will use the Gell-Mann matrices as the basis vectors. This will allow us to write the vectors as matrices, where the dot product of two vectors is replaced with a trace.

The subspace $J_{n^2-2}$ can be constructed by using the three vectors as follows. Let the three vectors $\vec r_1,\vec r_2,\vec r_3$ represent the vectors $\{\vec p_{n^2-2},\vec p_{n^2-1},\vec p_{n^2}\}$. Define the midpoint vector of the three vectors as $r_0=\frac{1}{3}(\vec r_1+\vec r_2+\vec r_3)$ and the vectors connecting $\vec r_0$ to $\vec r_k$ as $\vec v_k$, i.e., $\vec v_k = \vec r_k -\vec r_0$. The subspace $J_{n^2-1}$ can then be written as some linear combination of the basis vectors $\vec e_1$ and $\vec e_2$ given by the expressions 

\begin{equation}
   \begin{array}{ll}
    \vec e_1 & = \vec v_1  \mbox{ ,}\\
    \vec e_2 & = \vec v_2 -\frac{\vec v_2.\vec v_1}{|\vec v_1|^2}\vec v_1 \mbox{ .}
   \end{array}
\end{equation}

After normalizing $\vec e_2$ to have the same magnitude as $\vec e_1$, we can define the circle as $\cos (\theta)\vec e_1+\sin (\theta) \vec e_2$. First, we express the vectors $\vec v_k$ using the matrix representation by using the Gell-Mann matrices as $\vec r_k \rightarrow \vec r. \vec \Lambda =  \rho_k - \frac{1}{n}\mathbb I$ as 

\begin{equation}
\begin{array}{ll}
    \vec v_1 & \rightarrow \frac{2}{3} \rho_a - \frac{1}{3} (\rho_b+\rho_c) \mbox{ ,}\\
    \vec v_2 & \rightarrow \frac{2}{3} \rho_b - \frac{1}{3} (\rho_a+\rho_c) \mbox{ ,}\\
    \vec v_3 & \rightarrow \frac{2}{3} \rho_c - \frac{1}{3} (\rho_a+\rho_b) \mbox{ .}\\
\end{array}
\end{equation}

\noindent
We form the vectors $\vec e_1$ and $\vec e_2$ similarly, where $\vec v_2.\vec v_1 = \frac{1}{2}Tr\big((\frac{2}{3} \rho_1 - \frac{1}{3} (\rho_2+\rho_3))(\frac{2}{3} \rho_2 - \frac{1}{3} (\rho_1+\rho_3))\big)$ and  $\vec v_1.\vec v_1 = \frac{1}{2}Tr\big((\frac{2}{3} \rho_1 - \frac{1}{3} (\rho_2+\rho_3))^2\big)$, which becomes 

\begin{equation}
    \begin{array}{ll}
    \vec e_1 & \rightarrow \frac{2}{3} \rho_a - \frac{1}{3} (\rho_b+\rho_c) \mbox{ ,}\\
    \vec e_2 & \rightarrow \frac{1}{\sqrt{3}}(\rho_b-\rho_c) \mbox{ .}
    \end{array}
\end{equation}

\noindent
Then, we write the circle $J_{n^2-2}$ as $\cos (\theta) \vec e_1 + \sin (\theta) \vec e_2 + \vec r_0$, which has the following form in the matrix representation,

\begin{equation}\label{J subspace matrix form}
\begin{array}{ll}
    \Omega = (\frac{2}{3}\cos (\theta) + \frac{1}{3}) \rho_a + (-\frac{1}{3}\cos (\theta) +\frac{1}{\sqrt{3}}\sin (\theta) + \frac{1}{3})\rho_b \\ + (-\frac{1}{3}\cos (\theta) -\frac{1}{\sqrt{3}}\sin (\theta) + \frac{1}{3})\rho_c - \frac{1}{n}\mathbb I \mbox{ .}
\end{array}
\end{equation}

\noindent
If $f(\vec x) = \frac{(n-1)(n-2)}{n^2}$ for all vectors in $J_{n^2-2}$, then the matrix shown in \eref{J subspace matrix form} must satisfy $Tr\big((\frac{1}{n} \mathbb I + \Omega)^3\big) = 1$ for all $\theta$. We expand the trace of $(\frac{1}{n} \mathbb I + \Omega)^3$ as

\begin{equation}\label{continuity condition on J}
\begin{array}{ll}
     Tr\big((\frac{1}{n} \mathbb I + \Omega)^3\big) = \frac{1}{9 (n+1)^{3/2}} &\bigg( -4 \alpha +2 \cos (3 \theta ) \left(2 \alpha +\sqrt{n+1} n-2 \sqrt{n+1}\right) \\
     & +7 \sqrt{n+1} n+13 \sqrt{n+1}\bigg) \mbox{ ,}
\end{array}
\end{equation}

\noindent
where $\alpha$ is the cosine of the phase of the triple product $Tr(\rho_a\rho_b\rho_c)$. From \eref{continuity condition on J}, the coefficient of $\cos (3\theta)$ vanishes for $\alpha = \frac{1}{2}(\sqrt{n+1} n-2 \sqrt{n+1})$. Since $-1 \leq \alpha \leq 1$, $n$ can only be $3$, where $\alpha = -1$. Therefore, for dimensions greater than $3$, $f(\vec x)$ takes a continuous value for $\vec x \in J_{n^2-2}$  and as a result, theorem \ref{kakutani theorem} holds for a continuous value of $f_0$.

In dimension $3$, we can use the identity presented in corollary 2 of the article \cite{appleby2011lie},

\begin{equation}\label{triple product identity}
    \sum_{rst} Tr(\Pi_r\Pi_s\Pi_t) = n^4 \mbox{ ,}
\end{equation}

\noindent
to show that all the phases of the triple product can not be $\pi$. We can then choose three vectors of the SIC-POVMs that satisfy $Tr(\rho_a \rho_b \rho_c) \neq \pi$ to show that $f_0$ takes continuous values, thereby concluding the proof.
\end{proof}

Theorem \ref{continuity proof} can be extended to all $Tr(\rho^k)$ functions for $k>3$. To show this, note that pure states can be identified by any one of the functions for powers $k\geq3$, as shown in section \ref{density matrix on bloch}. As a result, if a subspace $J_{n^2-2}$ containing only pure states doesn't exist, then the functions $Tr(\rho^k)$ cannot be $1$ for any subspace $J_{n^2-1}$ as well. Consequently, continuous generalized SIC-POVMs must exist for any function $Tr(\rho^k)$.

In both dimensions, we searched for generalized SIC-POVMs, such that all of the elements have the same $Tr(\rho^3)$ or $Tr(\rho^4)$. In dimension 3, we generated $10^4$ solutions with precision of $O(10^{-18})$ by starting from a known SIC-POVM and minimizing the function $\sum_k^{n^2}(f(\vec p_k)-f_0)^2$ for arbitrary values $f_0 \in [-\frac{2}{9},\frac{2}{9}]$, where $f(\vec r)$ is the polynomial 

\begin{equation}\label{trace cube in 3d}
\begin{array}{ll}
    f(\vec r)&=2 \sqrt{3} r_1^2 r_8+6 r_1 r_2 r_3+6 r_1 r_5 r_6+3 r_2^2 r_7-\sqrt{3} r_2^2 r_8 \\ &-6 r_2 r_4 r_6-3 r_3^2 r_7-\sqrt{3} r_3^2 r_8+6 r_3 r_4 r_5
    +2 \sqrt{3} r_4^2 r_8\\&+3 r_5^2 r_7-\sqrt{3} r_5^2 r_8-3 r_6^2 r_7-\sqrt{3} r_6^2 r_8+2 \sqrt{3} r_7^2 r_8-\frac{2 r_8^3}{\sqrt{3}} \mbox{ .}
\end{array}
\end{equation}
Similarly, we generated general SIC-POVMs $\{\Pi_k\}$ such that $Tr(\Pi_k^3)=f_0$ and $Tr(\Pi_k^4)=f_0$ separately. However, unlike in dimension 3,  we weren't able to construct general SIC-POVMs where all elements have the same set of eigenvalues.

\section{Conclusion}

A geometric approach to SIC-POVMs is an ideal tool for the existence problem. By mapping the SIC-POVMs onto a real space, we showed that the problem of the existence of SIC-POVMs is related to Knaster's conjecture, which is a much broader geometric problem. The symmetry of the SIC-POVMs on the Bloch sphere, i.e., the simplex, is used to prove the conjecture for $N$ vectors of a regular $N$-simplex. As a result, we concluded that, In $n$-dimensional Hilbert space, a generalized SIC-POVM can be constructed such that $n^2-1$ of its elements have the $Tr(\rho^m)$ value for $m\geq3$. In dimension $3$, where the only relevant continuous function is formed from $Tr(\rho^3)$, generalized SIC-POVMs can be constructed such that 8 of the elements are equivalent up to some unitary operations. From the existence of the SIC-POVMs point of view, the result motivates the following questions. The first one is, Can we prove that in $n$-dimensional Hilbert space, a general SIC-POVMs with $n^2-1$ rank-1 projection operators exist? The second question is, If $n^2-1$ elements of a general SIC-POVM are rank-$1$ projection operators, can we prove that the last element is necessarily a rank-$1$ projection operator as well? We explored the two questions numerically for $3$ and $4$ dimensional Hilbert spaces. In both cases, we found that a general SIC-POVM can be constructed where all of its elements have the same value of $Tr(\rho^3)$ and $Tr(\rho^4)$, separately. In dimension 3, the large number of solutions generated suggests that generalized SIC-POVMs can be generated such that its elements have an arbitrary set of eigenvalues. 
 
To summarize, whether theorem \ref{kakutani theorem} can be extended to all $n^2$ vectors in the Bloch sphere of $n$-dimensional Hilbert space remains to be answered. All numerical searches in dimensions $3$ and $4$ have yielded general SIC-POVMs with $n^2$ of the vectors mapping to the same value of $Tr(\rho^3)$. A stronger version of the question is, can we construct $n^2$ equiangular vectors on the Bloch sphere such that all the vectors have the same $Tr(\rho^m)$ value of all $m$? This would generalize the results of dimension 3 to higher dimensions.

\section*{Acknowledgments}

Part of this work was carried out at Bilimler Köyü at Foça.

\section*{Reference}
\bibliography{ref}

\end{document}